\newtheorem{theorem}{Theorem}
\title{An Efficient Network-aware Direct Search Method\\for Influence Maximization}
\author{
Matteo Bergamaschi\thanks{University of Padua, Padova, Italy.}
\and
Sara Venturini\thanks{Massachusetts Institute of Technology, Cambridge, MA, USA.}
\and
Francesco Tudisco\thanks{The University of Edinburgh, Edinburgh, Scotland, UK.}
\and
Francesco Rinaldi\footnotemark[1]
}
\date{}
\begin{document}

\maketitle

\begin{abstract}
{Influence Maximization (IM) is a pivotal concept in social network analysis, involving the identification of influential nodes within a network to maximize the number of influenced nodes, and has a wide variety of applications that range from viral marketing and information dissemination to public health campaigns. 
IM can be modeled as a combinatorial optimization problem with a black-box objective function, where the goal is to select $B$ seed nodes that maximize the expected influence spread.  

Direct search methods, which do not require gradient information, are well-suited for such problems. Unlike gradient-based approaches, direct search algorithms, in fact, only evaluate the objective function at a suitably chosen set of trial points around the current solution to guide the search process.  However, these methods often suffer from scalability issues due to the high cost of function evaluations, especially when applied to combinatorial problems like IM.
This work, therefore, proposes the Network-aware Direct Search (NaDS) method, an innovative direct search approach that integrates the network structure into its neighborhood formulation and is used to tackle a mixed-integer programming formulation of the IM problem, the so-called General Information Propagation model. 
We tested our method on large-scale networks, comparing it to existing state-of-the-art approaches for the IM problem, including direct search methods and various greedy techniques and heuristics.
The results of the experiments empirically confirm the assumptions underlying NaDS, demonstrating that exploiting the graph structure of the IM problem in the algorithmic framework can significantly improve its computational efficiency in the considered context.}
{Influence Maximization, Network Analysis, Derivative-Free Optimization.}
\end{abstract}

\noindent\textbf{Keywords:} Influence Maximization, Network Analysis, Derivative-Free Optimization

\section{Introduction}
Influence Maximization (IM) is a key concept in social network analysis. 
It entails identifying a group of influential nodes (seeds) in a network, with the goal of maximizing the resulting number of influenced nodes 
\cite{bakshy2012role,centola2010spread,erkol2019systematic,nekovee2007theory}.
This concept has numerous applications, 
one of the most recognized applications of IM is viral marketing~\cite{domingos2001mining}, which involves a company trying to increase the adoption of a new product through the social networks of initial users. IM also forms the basis for numerous other applications, such as network monitoring~\cite{leskovec2007cost}, rumour control~\cite{budak2011limiting,he2012influence}, and social recommendation~\cite{ye2012exploring}.
The two most popular models to simulate how influence spreads through a network are the Independent Cascade (IC) Model and the Linear Threshold (LT) Model~\cite{kempe2003maximizing,shakarian2015independent}. In the IC model, each influenced node has a single chance to influence its neighbors, and if it succeeds, those neighbors are then activated and have a chance to influence their neighbors in subsequent rounds. Indeed, in the LT model, each node is influenced based on the fraction of its neighbors that are influenced, and a node becomes influenced if the weighted sum of its influenced neighbors exceeds a certain threshold.

IM is typically formulated as a combinatorial optimization problem. Given a budget of $B$ (the number of seed nodes to be selected), the objective is to find the set of $B$ nodes that maximizes the expected influence spread.
Here, we use as our information propagation model the General Information Propagation (GIP) model presented in \cite{tian2022unifying}. Such a model has the following features: it considers continuous variables, enables feedback between nodes, and encompasses two well-known variants of IC and LT models as special cases. Furthermore, the authors in \cite{tian2022unifying} formulate the influence maximization problem as a mixed integer nonlinear programming problem with a black-box objective function (i.e., a function whose analytical expression is not available) and propose a direct search approach, called Customized Direct Search (CDS), to solve it. 
However, the method has computational limitations and does not fully exploit the structure of the graph under analysis. See Section~\ref{sec:prob_set} for more details. 

  Direct Search Methods are well-known and well-suited zeroth-order/derivative-free optimization algorithms, and hence do not rely on derivative information to drive the optimization process (see, e.g., \cite{AuHa2017,conn2009introduction,dzahini2024revisiting,larson2019derivative} and references therein for further details on derivative-free and zeroth-order optimization algorithms). Instead, they make use of function evaluations to guide such a process, making them well suited for non-smooth, noisy, or black-box optimization problems where derivative information is unavailable or unreliable. In particular, they sample the objective function at a finite number of points at each iteration and decide which actions
to take next solely based on those function values, either by function value comparison through ranking or based on numerical values and without any explicit or implicit derivative approximation or model building. These methods are, hence, particularly effective in problems where gradient estimation is computationally expensive or not available, such as IM.
Since function evaluations can be time-consuming and resource-intensive, direct search methods, however, lack scalability and are computationally inefficient when dealing with large-scale problems. Therefore, balancing the trade-off between exploration and computational cost is crucial for practical application.

In order to overcome these limitations, in this work, we propose the Network-aware Direct Search (NaDS) method, an innovative direct search approach that exploits the network structure knowledge thus reducing the computational effort to find good solutions. This is possible thanks to a novel neighborhood definition based on the connections in the graph related to the current iterate. We provide extensive numerical tests that empirically demonstrate the better efficiency of our NaDS method, showing that direct search methods can tackle large-scale networks with thousands of nodes, outperforming baseline approaches in terms of influence score and computational time.

The remainder of the paper is organized as follows.
In Section~\ref{sec:rel_works}, we discuss the existing methods for the IM problem. In Section~\ref{sec:prob_set}, we present the problem, focusing on the mathematical formulation, and introduce the CDS method, which is the main direct search method we use for comparison. In Section~\ref{sec:NaDS}, we present our novel direct search method NaDS. Finally, in Section~\ref{sec:results}, we discuss numerical experiments, and in Section~\ref{sec:concl}, we draw some conclusions and discuss possible future work.

\section{Related Work}
\label{sec:rel_works}
The IM problem gained significant attention due to its applications in marketing, information dissemination, and the spread of health behaviors~\cite{chen2010scalable,bovet2019influence,leskovec2007dynamics,mossel2010submodularity,pastor2015epidemic}.
Due to its large number of applications and complex optimization formulation, it presents many research challenges. On the theoretical modelling side, there are different ways to define a diffusion model that captures information diffusion. On the algorithm side, solving IM optimally is proven to be NP-hard in most diffusion models. Its complexity extends to the evaluation of influence spread from any given seed set, which is also computationally intensive. Consequently, much of the existing research in this area is dedicated to finding approximate solutions.
Over the past decade, a variety of models and methods have been proposed to tackle this problem, and there are available comprehensive surveys that explore different facets of these advancements~\cite{im_survey,guille2013information,sun2011survey,zhang2014recent,chen2022information,arora2017debunking,tejaswi2016diffusion}.
In the following, using the taxonomy proposed in~\cite{im_survey}, we attempt to review and summarize some of these models and algorithms.

Different models employ distinct mechanisms for capturing how a user transitions from an inactive to an active status due to their neighbors' influence.
Diffusion models can be broadly categorized into progressive and non-progressive diffusion models, depending on whether activated nodes can be deactivated in subsequent steps or not, respectively. Examples of typical non-progressive models include the SIR/SIS model~\cite{kermack1927contribution} and the Voter model~\cite{clifford1973model}.
We will focus on the first class of progressive models and we will briefly present the three most popular models: the Independent Cascade (IC) model, the Linear Threshold (LT) model, and the Triggering (TR) model.

All these three models are diffusion models based on stochastic processes where information propagates from one node to its neighbors at each time step according to different probabilistic rules.
In the IC model~\cite{goldenberg2001talk}, each activated node has a fixed probability of influencing its neighbors in each time step. Once a node is activated, it cannot be deactivated, and each neighbor’s activation is independent of other neighbors’ activations.
In the LT model~\cite{granovetter1978threshold,schelling2006micromotives}, each node has a threshold that must be exceeded by the sum of the influence from its neighbors to become active. Each node is activated based on the weighted sum of its neighbors’ states compared to its own threshold.
In the TR model~\cite{kempe2003maximizing}, each node independently selects a random ``triggering set'' from its neighbors based on a specified distribution, and a node becomes active if it has a neighbor in its chosen triggering set that is active. 

These foundational models have paved the way for numerous extensions and adaptations in the literature. For instance, IC, LT, and TR are time-unaware models where diffusion stops when no more nodes can activate. However, real-world propagation campaigns often require maximizing influence within time constraints. This has led to the development of Time-Aware Diffusion Models, categorized as discrete-time~\cite{liu2012time,chen2012time,kim2014ct} and continuous-time models~\cite{rodriguez2011uncovering,xie2015dynadiffuse}.
In particular, in this work, we refer to \cite{tian2022unifying}, which introduces the GIP model, which extends the IC and LT to incorporate continuous variables and dynamic feedback mechanisms (see Section~\ref{GIP Model}).

State-of-the-art methods for IM include greedy algorithms, which have been shown to be effective since the seminal work of Kempe et al.~\cite{kempe2003maximizing}. These methods iteratively select nodes that provide the largest marginal gain in influence spread. 
They can be categorized into three primary categories depending on their goals:
simulation-based approaches involve  Monte-Carlo simulations to evaluate the influence spread and gain model generality~\cite{kempe2003maximizing}; proxy-based approaches utilize proxy models to approximate the influence function obtaining practical efficiency~\cite{chen2010scalable,chen2009efficient,kimura2006tractable,jung2012irie,liu2014influence}; sketch-based approaches focus on theoretical efficiency by initially constructing theoretically grounded sketches under a diffusion model, which are subsequently used to expedite the evaluation of the influence function~\cite{cheng2013staticgreedy,ohsaka2014fast,cohen2014sketch,borgs2014maximizing}.
 In addition, methods based on percolation processes or on the concept of k-core centrality have also been proposed as alternatives for identifying influential nodes~\cite{Morone_2015, Kitsak_2010,divideandconquer, 10.1093/comnet/cnz029}.

\section{Problem Setting}
\label{sec:prob_set}
We define our social network starting from a graph, denoted as $\mathcal{G}(\mathcal{N}, \mathcal{E}, W)$, which is undirected, connected, and weighted. Here, $\mathcal{N} = \{1, 2, \ldots, n\}$ represents the set of nodes, and $\mathcal{E} = \{(i, j) \mid \text{there exists a connection from node } i \text{ to node } j\}$ represents the set of edges. 

The weight matrix $W = [W_{ij}]$ encodes the strength or level of trust between nodes, where each weight satisfies $W_{ij} \geq 0$, $W_{ij} \in \mathbb{R}$. 
Specifically, $W_{ij} > 0$ if $(i, j) \in \mathcal{E}$, and $W_{ij} = 0$ otherwise, indicating the absence of a direct connection.
In this context, we employ the notation $x_i(t)$ to denote the state of node $i$ at discrete time step $t$.

Following~\cite{tian2022unifying}, in this section we formalize the influence maximization (IM) problem as well as the associated GIP model, and we review the CDS method which we will then use as a baseline for our improved NaDS scheme. 

\subsection{Problem Formulation}
\label{sec:prob_form}
The IM problem revolves around the objective of maximizing the collective impact on network nodes at the culmination of the propagation process. Our particular focus lies on an essential constraint: the initiation of a finite number of nodes determined by a budget size. This constraint represents the notion of allocating limited resources to influence a wider audience.

Given a defined information propagation process and an associated function $s_j = \sum_{t=0}^\infty(1-\gamma)^t x_j(t)$ gauging the overall influence on each node $j$, the cumulative influence across the network emerges naturally as:
\begin{equation}\label{obj}
   s(x) = \sum_{j=1}^n s_j(x(0)), 
\end{equation}
where $x(0)$ represents the initial state vector. Consequently, the IM problem revolves around maximizing $s(x)$, while ensuring adherence to the constraint of activating a limited number of nodes:
\begin{equation}\label{constraint}
    |\{j : x_j(0) > 0\}| \leq B,
\end{equation}
where $B \in \mathbb{N}$ represents the budget size.

Based on the objective function \eqref{obj} and constraint \eqref{constraint}, the IM problem  can be formulated as a mixed-integer nonlinear program (MINLP):
\begin{equation}\label{minlp}
    \begin{aligned}
    \max_{x,z} \quad & s(x) \\
    \text{s.t.} \quad & x_j \leq h_{j,0} \ z_j, \\
    & x_j \geq l_{j,0} \ z_j, \\
    & \sum_j z_j \leq B, \\
    & x_j \in \mathbb{R}, \quad z_j \in \{0, 1\}, \quad \forall j \in [n],
\end{aligned}
\end{equation}
where $0 < l_{j,0} \leq h_{j,0}$ define a lower and an upper bound on the initial influence level of node $j$. The objective function $s(\cdot)$ measures the aggregate influence across the network, as illustrated in \eqref{obj}. The vector $x$ comprises initial state values, while vector $z$, of the same dimensions, signifies the decision to set positive initial state values ($z_j = 1$) or not ($z_j = 0$), thus enforcing the constraint \eqref{constraint}.

Taking advantage of the non-decreasing nature of $s(x)$, the task of maximizing the objective $s(x)$ with respect to both $x$ and $z$ in the MINLP \eqref{minlp} can be simplified to the task of maximizing $s(x)$ while $x$ and $z$ are set to their highest attainable values. Specifically, we set $x_j = h_{j,0} \ z_j$ and $\sum_j z_j = B$, effectively reducing the problem to the following form with respect to the binary vector $z$:
\begin{equation}\label{final}
    \begin{aligned}
    \max_z \quad & s(\bold{h_0} \odot z)\\
        \text{s.t.} \quad & \sum_j z_j = B,\\
        & z_j \in \{0, 1\}, \quad \forall j,
    \end{aligned}
\end{equation}
where $\bold{h_0} = (h_{j,0})$ and $\odot$ denotes the element-wise Hadamard product. As a result, the domain $\Omega^B$ becomes a natural mesh\footnote{In the context of mesh adaptive direct search methods, the mesh is a discrete subset of the parameter space that is adaptively refined or coarsened throughout the optimization process. It provides a framework for generating candidate points around the current best solution.} for searching at every iteration of the algorithm.
\begin{equation}
    \Omega^B = \{ z \in \{0, 1\}^n : \sum_j z_j = B \}.
\end{equation}
The constraints are inherently embedded within the domain, and they are managed using the extreme barrier approach $s_{\Omega^B}$, where $s_{\Omega^B}(z) = s(\bold{h_0} \odot z)$ if $z \in \Omega^B$ and $-\infty$ otherwise. We define the neighborhood function for binary variables $z$ as:
\begin{equation}\label{neighborhood}
    N(z,d) = \{ y \in \Omega^B : \| y - z \|_1 \leq d \},
\end{equation}
where $d \in \mathbb{N}\setminus\{1\}$, as the $L_1$ distance between $y$ and $z$ can reach a minimum of 2 when $y \neq z$ and $y, z \in \Omega^B$. This minimal distance of 2 is achieved when only one element of positive value is exchanged with an element of value $0$.

Given $d \in \mathbb{N}\setminus\{1\}$, we can now formally define a $d$-local maximum of Problem \eqref{final} as a point $z^*$ satisfying the following condition: 

\begin{equation}
s_{\Omega^B}(z^*) \geq s_{\Omega^B}(z), \quad \forall z \in N(z^*, d).
\end{equation}

\subsection{The GIP Model}
\label{GIP Model}
Here, we review the General Information Propagation (GIP) model proposed in~\cite{tian2022unifying}, which unifies the fundamental mechanisms underlying the two classic propagation models. This model encompasses two primary aspects:
\begin{enumerate}[label=(\roman*)]
    \item Each node $i$ independently attempts to influence its neighbors, proportional to the edge weight and its current state value $x_i(t)$. This characteristic aligns with the principles of the IC model.
    \item The actual impact on each node $j$ stems from the collective behaviour of its entire neighborhood.
    \end{enumerate}

The latter point is achieved by applying a nonlinear transformation to
\begin{equation}
    y_j(t) = \sum_i W_{ij} \, x_i(t - 1),
\end{equation}
enabling us to capture how the cumulative influence attempts from all neighbors translate into a state change for node $j$. 
Intuitively, this equation captures the idea that the current state or output  $y_j(t)$ is influenced by a linear combination of prior inputs, where each input $x_i(t-1)$ contributes according to its associated weight $W_{ij}$.
This concept draws parallels not only with the LT model but also with nonlinear models applied to opinion dynamics. Specifically, at each time step $t > 0$, we introduce a lower bound $l_{j,t}$ representing the threshold required to initiate propagation. This implies that $x_j(t) = 0$ if $y_j(t) < l_{j,t}$. Additionally, an upper bound $h_{j,t}$ accounts for the saturation effect, signifying that $x_j(t) = h_{j,t}$ if $y_j(t) \geq h_{j,t}$ . 

Overall, the GIP model can be described as a bounded discrete  dynamical system:
\begin{equation}\label{gip}
    x_j(t) = f_{j,t}\left[\sum_i W_{ij} x_i(t - 1)\right], \quad \text{for all } t > 0, \ j \in V,
\end{equation}
where $f_{j,t}(x)$ takes the form:
\begin{equation}
    f_{j,t}(x) = \begin{cases}
0, & x < l_{j,t}, \\
x, & l_{j,t} \leq x < h_{j,t}, \\
h_{j,t}, & x \geq h_{j,t},
\end{cases}
\end{equation}
representing the time-dependent bounds for each node $j$. 
$\{l_{j,t}\}$ and $\{h_{j,t}\}$ denote the time-dependent lower and upper bounds for each node $j$ respectively, where $0 \leq l_{j,t} \leq h_{j,t}$. These bound values offer flexibility in characterizing the underlying population. Furthermore, it is worth noting that the GIP model can replicate the classic models by setting specific bound values. The initial states $x(0)$ are predefined, with $x_j(0) \in \{0\} \cup [l_{j,0}, h_{j,0}]$ and $l_{j,0} > 0$.

Specifically we will use the following threshold-type bounds for $t > 0$ and $j \in V$:
\begin{equation}\label{treshold-bound}
   l_{j,t} = (\theta_{l,j} \alpha)^t l_{j,0}, \\ \quad
   h_{j,t} = \theta_{h,j} \, \theta_{l,j}^{t - 1} \alpha^t h_{j,0},
\end{equation}
where $\alpha = \frac{1}{|E|} \sum_{(i,j) \in E} W_{ij}$ is the average across edge weights, while $\theta_{l,j}$ and $\theta_{h,j}$ are the two parameters that characterize the propagation dynamics in the GIP model, as they decide the trend for the lower and the upper bound.
In order to guarantee convergence of the propagation algorithm and ensure that the bounds tend to zero, it is required that $\theta_{l,j} \alpha < 1$.

For a detailed explanation of the propagation process, we refer to Algorithm~\ref{propalgo}.

\begin{algorithm}[t]
\caption{Propagation Algorithm for the GIP Model}
\begin{algorithmic}[1]
\REQUIRE A undirected, connected, and weighted network $\mathcal{G}(\mathcal{N}, \mathcal{E}, W)$ with non-negative weights,
parameters $\{l_{j,t}, h_{j,t}\}$ in the GIP model, time-discounting factor $\gamma$, initial state $x(0)$ where $x_j(0) \in [l_{j,0}, h_{j,0}]$ if and only if $j \in A_0$ ($0$ otherwise), and the tolerance $\epsilon$.
\ENSURE The value of the objective function in the MINLP, $s$.
\STATE Set $t \gets 0$, $x(t) \gets x(0)$, and $s \gets 0$.
\STATE Set $N_t \gets \bigcup_{j \in A_0} N^{\text{out}}(j)$, where $N^{\text{out}}(j) := \{ k \in V \mid (j, k) \in E \}$.
\WHILE{$|(1 - \gamma)^t x(t)| > \epsilon$}
    \STATE $A_{t+1}, N_{t+1} \gets \varnothing$, and $x(t+1) \gets 0$.
    \FOR{each potentially activated node $j \in N_t$}
        \STATE $x(t+1)_j \gets f_{j,t} \left( \sum_{i \in A_t} W_{ij} x(t)_i \right)$.
        \IF{$x(t+1)_j > 0$}
            \STATE $A_{t+1} \gets A_{t+1} \cup \{j\}$.
            \STATE $N_{t+1} \gets N_{t+1} \cup N^{\text{out}}(j)$.
            \STATE $s \gets s + (1 - \gamma)^{t+1} x(t+1)_j$.
        \ENDIF
    \ENDFOR
    \STATE $t \gets t + 1$.
\ENDWHILE
\end{algorithmic}
\label{propalgo}
\end{algorithm}

\subsection{CDS Method}
\label{section:cds}

In~\cite{tian2022unifying}, after introducing the GIP Model (see Section~\ref{GIP Model}), the CDS method is proposed to solve it.

The algorithm starts with the node set that maximizes the Katz centrality \cite{Katz_1953}, which represents an exact solution in certain GIP settings. At each iteration $r$, we start with a \textbf{SEARCH step} that suitably explores the points in the domain $\Omega^B$ in order to possibly find a new iterate that improves the objective function value. If this is the case, we directly switch to the \textbf{Termination check}. Otherwise, we perform a \textbf{POLL step}. Such a step explores the local neighborhood of the current candidate $z(r)$ until a point with a sufficient improvement in the objective value is found or all points have been examined. In the termination check, if an improved point is identified, the algorithm will start a new iteration but will decrease the required improvement if a sufficiently improved point has not been discovered. If no improvement is found, the algorithm outputs the current iterate and terminates. We refer to Algorithm~\ref{cds} for a detailed description of steps.

Hence, the termination step directly guarantees convergence to local minima. Even though convergence to global optimal solutions could be achieved through a carefully developed search step and a better understanding of the objective function's landscape to avoid poor local optima, the drawback of global optimization methods is their time complexity. Thus, we maintain the search step as an optional component. The CDS method takes advantage of the problem's characteristics and mitigates worst-case complexity by initializing with an exact solution when the GIP model reaches the extreme of the EIC model. 
As conjectured in \cite{tian2022unifying}, the local optima in the vicinity of this particular solution are expected to provide sufficiently good values of the objective function.

 From the current CDS method, two avenues for enhancing output quality emerge:
 (i) brute force, where all node sets of size $k$ are evaluated to select the optimal one, and (ii) random sampling, which exhibits asymptotic global convergence if it samples densely enough. Two improvement strategies inspired by those methods can hence be used in our context. Firstly, by expanding the neighborhood definition, more points are searched within the feasible domain. If the neighborhood encompasses the entire domain, we get back to the brute-force method. Secondly, the search process (steps 2, 3, and 4 in Algorithm~\ref{cds}) can be restarted from other randomly generated unexplored points, mirroring the logic of the search step. This strategy achieves asymptotic convergence to global minima, akin to the random sampling approach.

\begin{algorithm}[t]
\caption{Customized Direct Search (CDS) Method}
\begin{algorithmic}[1]
\STATE \textbf{Initialization:} Set $0 < \zeta, \delta < 1, d > 1$. Let $z(0) \in \Omega^B$ such that $z(0)_j = 1$ if node $j \in A = \{j_1, \ldots, j_k\}$ where $h_{i,0} \, c_i = h_{j,0} \, c_j, \forall i \notin A, j \in A$, and $c = \{[I - (1 - \gamma)W]^{-1} - I\} \mathbf{1}$ is the Katz centrality. Set iteration $r = 0$.
\STATE \textbf{SEARCH step (optional):} Evaluate $s_{\Omega^B}$ on a finite subset of trial points on the domain $\Omega^B$, until a sufficiently improved point $z$ is found, where $s_{\Omega^B}(z) > (1 + \zeta)s_{\Omega^B}(z^{(r)})$, or all points have been exhausted. If an improved point is found, then the SEARCH step may terminate; skip the next POLL step and go directly to step 4.
\STATE \textbf{POLL step:} Evaluate $s_{\Omega^B}$ on the set $N(z^{(r)}, d) \subset \Omega^B$ as in \eqref{neighborhood}, until a sufficiently improved point $z$ is found, where $s_{\Omega^B}(z) > (1 + \zeta)s_{\Omega^B}(z^{(r)})$, or all points have been exhausted.
\STATE \textbf{Termination check:} If an improvement is found, set $z^{(r+1)}$ as the improved solution, while decreasing $\zeta \leftarrow \delta \zeta$ if a sufficient improvement has not been found, increment $r \leftarrow r + 1$, and go to step 2. Otherwise, output the solution $z^{(r)}$.
\end{algorithmic}
\label{cds}
\end{algorithm}

\section{NaDS Method}
\label{sec:NaDS}
In this section, we present the NaDS method, which  takes advantage of the inherent structure of the network to reduce the computational cost of the exploration strategy, thus speeding up the search process in CDS.

In the context of direct search methods, a central challenge revolves around the selection of an appropriate  neighborhood, a crucial factor that significantly influences the way solution space is explored. In the preceding section, we reviewed the CDS method, which employs a rather straightforward search strategy. Specifically, it populates the neighborhood with all points located at a specified $L_1$ distance, commonly set at $2$. Moreover, the CDS method incorporates the graph structure by utilizing Katz centrality to guide the initial point selection. However, these measures do not adequately take into account the underlying graph structure. In this section, we introduce a variation of CDS that directly exploits the network structure in the neighborhood construction phase. We call the resulting algorithm Network-aware Direct Search (NaDS) method.

The proposed NaDS method deviates primarily from CDS in its trial point selection. This new methodology aims to better leverage the underlying graph structure during neighborhood construction. The core notion is centered around a more conservative selection of trial points, emphasizing connections within the graph.
The fundamental premise of the NaDS method is to exclusively consider points that are interconnected through graph edges. Unlike the CDS inclusive approach of exploring all points at a set $L_1$ distance, the NaDS method narrows down its focus to only those points that are directly linked through edges in the graph.
More specifically, in the main poll step, NaDS filters the trial points through the set $C$, defined as follows:
\begin{equation}
\label{czetadef}
    C(z) = \{ y \in \Omega^B |\ y_i = 1 \Rightarrow \exists\ j\ \mbox{s.t.}\ (i,j) \in E \lor z_i = 1 \}.
\end{equation}
Here, $C(z)$ represents the set of points that can be constructed using only nodes that are either:
\begin{enumerate}[label=(\roman*)]
    \item already active in $z$ (i.e., $z_i = 1$),
    \item connected to an active node in $z$ (i.e., $\exists$ $j$ such that $(i,j) \in E$).
\end{enumerate}
Figure~\ref{fig:mesh_comparison} provides an example that illustrates the differences in neighborhood construction between CDS and NaDS.

\begin{figure}[t]
    \centering
    \includegraphics[width=0.3\columnwidth]{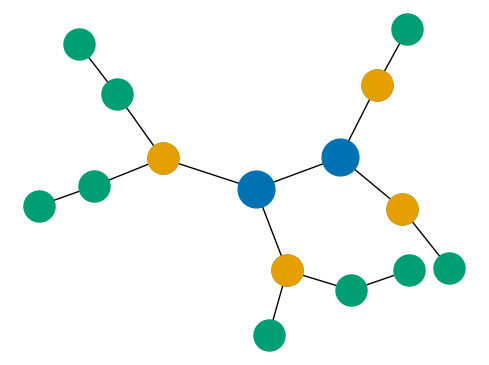}
    \caption{Simple example of how nodes are selected in CDS and NaDS. Given a set of activated nodes (in blue), CDS selects potential swap nodes from all the remaining nodes of the graph (in green and orange), while NaDS only considers the ones connected to the starting nodes (in orange).}
    \label{fig:mesh_comparison}
\end{figure}

The NaDS method operates within a more constrained neighborhood, leading to a reduction in the number of function evaluations required at each iteration. Importantly, this reduction in neighborhood size does not compromise the quality of the search. This is due to the fact that the NaDS method hones in solely on the most significant and meaningful points within the solution space.

This approach ensures a more focused exploration by considering points that are directly linked in the graph rather than exhaustively covering a broader $L_1$ distance as done by the CDS method. By deliberately selecting only interconnected points, the NaDS method strikes a balance between the efficiency and depth of the search.

In Algorithm~\ref{NaDS}, we present the detailed scheme of the NaDS method. In general, we do not start from a solution based on Katz centrality like CDS.
In fact, our experiments showed that more effective and faster heuristics can be used as starting points, depending on the graph's characteristics and size — for example, Single Discount and Simple Greedy — as demonstrated in Section \ref{sec:results}. Then, at each iteration $r$, we start with an optional \textbf{SEARCH step} that suitably explores the points in the mesh $\Omega^B$ in order to possibly find a new iterate that improves the objective function value. If this is the case, we directly switch to the \textbf{Termination check}; otherwise, we perform the \textbf{POLL step}. 

In the first phase of the poll step, we explore the refined neighborhood defined in \eqref{czetadef}. This is the core of the poll step, and as our tests empirically confirm, almost every improving point is found in this phase. In case this phase cannot find an improving point, NaDS performs \textbf{two additional POLL step phases}, which also ensure local convergence, as in the case of the CDS method. In the second phase, the refined neighborhood \eqref{czetadef} is dynamically expanded to match the $L_1$ neighborhood defined in \eqref{neighborhood}. We remark that in this phase, we set the distance $d$ to the minimum, which is 2. The third phase consists of expanding the aforementioned neighborhood by increasing the distance $d$. In general, this third phase is optional due to the high computational cost it requires, but it is probably useful in smaller networks.

%
The local convergence of the NaDS method is guaranteed by construction since it operates within a specific neighborhood and will, therefore, find a local minimum within that area. Precisely, we have
\begin{theorem}
Let $\{z_k\}$ and $\{s_k\}$ be the sequences of solutions and reference values generated by NaDS. Then, the algorithm produces a $d$-local maximum for any fixed $d$ in finite time. 
\end{theorem}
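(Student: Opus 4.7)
The plan is to combine three ingredients: finiteness of the feasible domain, strict monotonicity of the sequence of accepted iterates, and the fact that the last phase of the NaDS poll step sweeps the full $L_1$-neighborhood $N(z,d)$.

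First, I would observe that the feasible set $\Omega^B = \{z \in \{0,1\}^n : \sum_j z_j = B\}$ has cardinality $\binom{n}{B} < \infty$, so the image $s_{\Omega^B}(\Omega^B)$ consists of finitely many real values (plus possibly $-\infty$ via the extreme-barrier penalization). In particular, the set of distinct values attained by the objective along any sequence of feasible iterates is finite.

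Second, I would inspect the termination check inherited from Algorithm~\ref{cds} and used in Algorithm~\ref{NaDS}: whenever a new iterate $z^{(r+1)}$ is set, it is the result of finding a point $y$ with $s_{\Omega^B}(y) > s_{\Omega^B}(z^{(r)})$ (the ``sufficient improvement'' condition with factor $1+\zeta$ only controls whether $\zeta$ is subsequently shrunk; any accepted update is strictly improving). Hence $\{s_k\}$ is strictly increasing along successful iterations, and by the previous paragraph this can occur only finitely many times. Therefore there exists a finite index $r^\ast$ at which the termination check enters the ``no improvement found'' branch and the algorithm outputs $z^{(r^\ast)}$.

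Third, I would show that $z^{(r^\ast)}$ is a $d$-local maximum in the sense of the definition given after~\eqref{neighborhood}. The ``no improvement found'' branch is reached only after the three phases of the POLL step have all failed; in particular, the third phase sweeps the entire $L_1$-neighborhood $N(z^{(r^\ast)}, d)$ of~\eqref{neighborhood} for the prescribed distance $d$. Failure in that phase means $s_{\Omega^B}(y) \le s_{\Omega^B}(z^{(r^\ast)})$ for every $y \in N(z^{(r^\ast)}, d)$, which is precisely the definition of a $d$-local maximum.

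The main obstacle is the last step, and it is really a bookkeeping point: one must be careful that termination genuinely coincides with exhausting the $L_1$-neighborhood $N(z^{(r^\ast)},d)$, since the first two phases of NaDS only explore the restricted sets $C(z^{(r^\ast)})$ and $N(z^{(r^\ast)},2)$. A failure after phase one or phase two alone would not certify $d$-local optimality for $d > 2$, so the argument crucially relies on the guarantee — stated in the description preceding the theorem — that the third phase is activated whenever the earlier phases fail. With that instantiation, the three steps above combine to yield the claim.
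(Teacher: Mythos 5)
Your proposal is correct and follows essentially the same route as the paper's own proof: finite termination from the strict monotonicity $s_0 < s_1 < \dots < s_k$ combined with the finiteness of $\Omega^B$, and $d$-local optimality from the fact that termination occurs only after the entire neighborhood $N(z^{(r^\ast)},d)$ has been evaluated without improvement. Your closing remark is a genuine refinement over the paper's terse argument, which simply asserts that the POLL step evaluates all of $N(z_k,d)$ without addressing that Phase~3 is labelled optional in Algorithm~\ref{NaDS} and must in fact be enforced (with $d_{\max}\geq d$) for the claim to hold for $d>2$.
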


\begin{proof}
Let $\{z_k\}$ and $\{s_k\}$ be the sequences of solutions and reference values generated by NaDS.  
Since $z_k$ is the output of NaDS, the \textbf{POLL step} phase ensures that every point $z_t$ in the neighborhood $N(z_k, d)$ has been evaluated, and there is no point such that $s_t > s_k$. This guarantees that $z_k$ is a $d$-local maximum. To prove that NaDS produces the sequence $\{z_k\}$ in finite time, we observe that $s_0 < s_1 < \dots < s_k$, which implies that NaDS does not visit previously evaluated points twice.  
Furthermore, since the bounded domain $\Omega^B$ contains a finite number of points, it follows that the sequence $\{z_k\}$ must also be finite.

\end{proof}

\begin{algorithm}[t]
\caption{Network-aware Direct Search (NaDS) Method}
\begin{algorithmic}[1]
\STATE \textbf{Initialization:} Set $0 < \zeta, \delta < 1, d = 2,  d_{max} > 1$. Let $z(0) \in \Omega^B$ be a suitably chosen starting point. Set iteration $r = 0$.

\STATE \textbf{SEARCH step (optional):} Evaluate $s_{\Omega^B}$ on a finite subset of trial points on the domain $\Omega^B$, until a sufficiently improved point $z$ is found, where $s_{\Omega^B}(z) > (1 + \zeta)s_{\Omega^B}(z^{(r)})$, or all points have been exhausted. If an improved point is found, then the SEARCH step may terminate, skip the next POLL step and go directly to step 6.

\STATE \textbf{POLL step (Phase 1):} Evaluate $s_{\Omega^B}$ on the set $N(z^{(r)},d)\cap C(z^{(r)}) \subset \Omega^B$ as in \eqref{neighborhood} with distance $d$, until a sufficiently improved point $z$ is found, where $s(\Omega^B)(z) > (1 + \zeta)s(\Omega^B)(z^{(r)})$, or all points have been exhausted. $C(z^{(r)})$ is the set of points that are connected by an arc with $z^{(r)}$. If an improved point is found, then the POLL step may
terminate, and go directly to step 6.
\STATE \textbf{POLL step (Phase 2):} Expand dynamically $N(z^{(r)},d)\cap C(z^{(r)})$ to $N(z^{(r)},d)$. If an improved point is found, then the POLL step may
terminate, skip Phase 3, and go directly to step 6.
\STATE \textbf{POLL step (Phase 3 - Optional):} If $d < d_{max}$ then increase $d$. 
\STATE \textbf{Termination check:} If an improvement is found, set $z^{(r+1)}$ as the improved solution, while decreasing $\zeta \leftarrow \delta \zeta$ if a sufficient improvement has not been found, increment $r \leftarrow r + 1$, and go to step 2. Otherwise, output the solution $z^{(r)}$.
\end{algorithmic}
\label{NaDS}
\end{algorithm}

\section{Numerical Experiments}\label{sec:results}

In this section, we evaluate the NaDS method through extensive experiments, showing its superiority over existing approaches. Our results indeed show that NaDS consistently outperforms both the CDS method introduced in \cite{tian2022unifying} and classic heuristics across all tested networks, achieving higher influence scores in 23 over 24 test cases. The network-aware neighborhood construction is particularly effective, reducing the search space by 35-60\%, while maintaining solution quality. The method shows robust performance across diverse network structures and budget constraints, from small collaboration networks to large social graphs. The following sections detail our experimental setup on six real-world networks (Table~\ref{tab:graph_summary}) with budgets $B \in \{5, 10, 15, 20\}$, comparing against five baseline methods while evaluating both influence spread and computational efficiency. We also conducted  experiments on artificial graphs; see the Appendix (Section~\ref{sec:artificial}) for a detailed overview.


The complete code for the implementation of methods and experiments is available at \url{https://github.com/Berga53/Network-aware-Direct-Search}.
We acknowledge the use of the open-source libraries NetworkX \cite{SciPyProceedings_11}, NumPy \cite{harris2020array}, SciPy \cite{virtanen2020scipy}, and Matplotlib \cite{4160265} in the development of our methods and experiments.

Given that greedy algorithms and heuristics are widely used for this problem, we also incorporate several of these methods into our experiments. The objectives of these experiments are outlined below:
\begin{enumerate}[label=(\roman*)]
    \item to compare the performance of NaDS against the CDS method and show that NaDS achieves better results;
    \item to illustrate that greedy methods are not suitable for every IM setting, and specifically, demonstrate that these methods lack guarantees and yield worse results in our particular setting;
    \item to show that NaDS provides better and more reliable results compared to the best heuristics available in the literature for the IM problem.
\end{enumerate}

\begin{table}[t]
\centering
\caption{Basic statistics for the real-world datasets, including the reference, number of nodes, and number of edges for each dataset.}
\begin{tabular}{llrr}
\toprule
Dataset      & Ref. & \#nodes       & \#edges       \\
\midrule
Arxiv Astro & \cite{10.1145/1217299.1217301} & 18772 & 198110 \\
Arxiv Gr-Qc & \cite{10.1145/1217299.1217301} & 5242 & 14496  \\
Arxiv Hep-Ph & \cite{10.1145/1217299.1217301} & 12008 & 118521 \\
Lastfm-Asia & \cite{feather} & 7624  & 27806  \\
email-Enron & \cite{leskovec2008communitystructurelargenetworks}\cite{enron} & 36692 & 183831 \\
Facebook & \cite{10.5555/2999134.2999195}    & 4039  & 88234 \\
\bottomrule
\end{tabular}
\label{tab:graph_summary}
\end{table}

The six networks used for our experiments details are reported in Table~\ref{tab:graph_summary}. 
We selected networks from various applications of the IM problem, which include the science of science (Arxiv Astro, Arxiv Gr-Qc, Arxiv Hep-Ph networks) and opinion dynamics (Lastfm-Asia, email-Enron, Facebook).
All the datasets were selected from the SNAP dataset \cite{snapnets}. Below we give a brief description:
\begin{enumerate}
    \item \textbf{Arxiv Astro} \cite{10.1145/1217299.1217301}: collaboration network from the e-print arXiv restricted to 
    papers in the Astro Physics category. Nodes represent authors, and edges indicate collaborations between pairs of authors.
    The data covers papers in the period from January 1993 to April 2003. 
    \item \textbf{Arxiv Gr-Qc} \cite{10.1145/1217299.1217301}: collaboration network from the e-print arXiv restricted to 
    papers in the General Relativity and Quantum Cosmology category. Nodes represent authors, and edges indicate collaborations between pairs of authors.
    The data covers papers in the period from January 1993 to April 2003. 
    \item \textbf{Arxiv Hep-Ph} \cite{10.1145/1217299.1217301}: collaboration network from the e-print arXiv restricted to 
    papers in the High Energy Physics - Phenomenology category. Nodes represent authors, and edges indicate collaborations between pairs of authors.
    The data covers papers in the period from January 1993 to April 2003.
    \item \textbf{Lastfm-Asia} \cite{feather}: social network from the LastFM music website.
    Nodes are LastFM users from Asian countries and edges are mutual follower relationships between them.
    The dataset was collected from the public API in March 2020. 
    \item \textbf{email-Enron} \cite{leskovec2008communitystructurelargenetworks}\cite{enron}: 
    communication network containing around half a million emails generated by employees of the Enron Corporation. 
    This data was originally made public, and posted to the web, by the Federal Energy Regulatory Commission during its investigation. Nodes represent email addresses, and edges indicate email communication between pairs of addresses.
    \item \textbf{Facebook} \cite{10.5555/2999134.2999195}: social network  
    consisting of "circles" (or "friends lists") from Facebook. The dataset was collected from survey participants using the Facebook app. Nodes represent Facebook users, and edges are mutual follower relationships between them.
\end{enumerate}
The methods considered in our experiments are:
\begin{itemize}
\item \textbf{NaDS Method}: the direct search method proposed in this paper;
    \item \textbf{CDS Method} \cite{tian2022unifying}: the direct search method presented in Section~\ref{section:cds};
    \item \textbf{Other Non-Direct Search Methods}:
    \begin{itemize}
        \item \textbf{Single Discount (SD)} \cite{chen2010scalable}: a heuristic that selects nodes based on their centrality, adding a penalty for those that are connected to other high-centrality nodes.
        \item \textbf{Simple Greedy (SG)} \cite{kempe2003maximizing}: a classic greedy algorithm for IM, which iteratively selects nodes to maximize the spread of influence. At each step, it evaluates the marginal gain in influence spread for each candidate node and adds the node with the highest gain to the seed set.
        \item \textbf{Katz Centrality (KC)} \cite{tian2022unifying}: a heuristic that selects the nodes with the highest value of Katz centrality score, a centrality measure that accounts for the number of paths from a node to all others, weighted by path length.
        \item \textbf{K-core Centrality (CC)} \cite{Kitsak_2010}: a heuristic that selects nodes by their core number, defined as the largest k-core in which the node is a member. A k-core refers to a maximal subgraph where each node has a degree that is equal to or greater than k.
        \item \textbf{Collective Influence (CI)} \cite{Morone_2015}: a heuristic that selects nodes with the highest value of Collective Influence, a centrality measure designed to identify nodes whose removal most efficiently fragments the network. CI accounts for both a node’s local degree and the degrees of nodes at the boundary of a ball of fixed radius.
    \end{itemize}
\end{itemize}    

\begin{figure}[t]
    \centering
    \includegraphics[width=0.8\columnwidth]{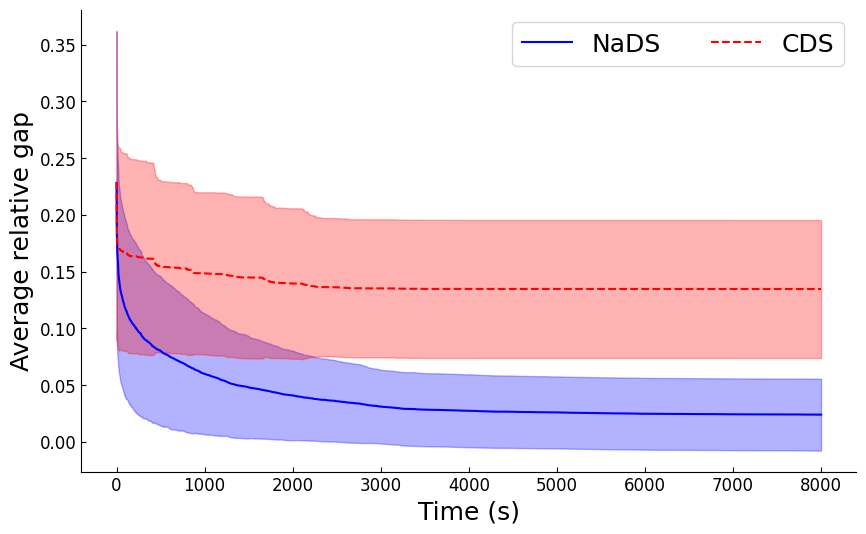}
    \caption{
   Results of the second set of experiments where we compare the average relative gap over time 
 (with shaded standard deviation) of the NaDS and CDS methods, resp. in blue and red.}
    \label{fig:avggap}
\end{figure}

\begin{table}[t]
 
\caption{Performance comparison based on cumulative influence from the first set of experiments, evaluated across different datasets and budget levels. The best results are highlighted in bold. NaDS (mean) is computed accross 10 different realizations.}
\centering
\footnotesize
\setlength{\tabcolsep}{1.2pt}
\begin{tabular}{lcccccccccccc} 
\toprule
\multirow{2}{*}{\textbf{Dataset}} & \multirow{2}{*}{\textbf{B}} 
& \multicolumn{3}{c}{\textbf{NaDS}} 
& \multicolumn{3}{c}{\textbf{CDS}} 
& \multirow{2}{*}{\textbf{SD}} 
& \multirow{2}{*}{\textbf{SG}} 
& \multirow{2}{*}{\textbf{KC}} 
& \multirow{2}{*}{\textbf{CC}}
& \multirow{2}{*}{\textbf{CI}} \\
\cmidrule(lr){3-6} \cmidrule(lr){6-8}
& & Mean & Best & Disc & Mean & Best & Disc & & & & & \\
\midrule
\multirow{4}{*}{Arxiv Astro}
 & 5  & 3751.43 & \textbf{3815.8} & \textbf{3815.8} & 3373.38 & 3749.06 & 3597.15 & 3571.59 & 2049.12 & 357.35 & 1934.67 & 3186.73 \\
 & 10 & 5068.22 & \textbf{5083.1} & \textbf{5083.1} & 4404.69 & 4592.02 & 4989.82 & 4853.77 & 4031.24 & 459.94& 2602.13 & 4560.46  \\
 & 15 & 5986.41 & \textbf{5989.09} & 5977.78 & 5221.54 & 5472.44 & 5909.25 & 5796.86 & 5315.98 & 795.46 & 3008.36 & 5501.35  \\
 & 20 & 6660.95 & \textbf{6671.49} & \textbf{6671.49} & 5761.73 & 5972.2 & 6558.47 & 6481.9 & 6304.03 & 1149.91 & 3325.22 & 6284.10  \\

\midrule
\multirow{4}{*}{Arxiv Gr-Qc}
 & 5  & 199.31 & 199.97 & 198.74 & 203.1 & \textbf{220.22} & 198.74 & 186.45 & 4.29 & 34.18 & 184.29 & 172.78  \\
 & 10  & 280.86 & \textbf{307.13} & 277.39 & 271.9 & 292.97 & 276.94 & 275.9 & 35.53 & 74.69 & 245.79 & 237.71 \\
 & 15  & 375.32 & \textbf{379.4} & 340.92 & 341.73 & 350.05 & \textbf{379.4} & 326.85 & 188.29 & 100.49 & 307.29 & 295.79 \\
 & 20   & 446.26 & \textbf{460.46} & 396.38 & 410.38 & 430.1 & 438.01 & 382.93 & 270.98 & 121.66 & 359.82 & 340.20 \\

\midrule
\multirow{4}{*}{Arxiv HeP-Ph}
 & 5   & 2329.97 & \textbf{2339.73} & 2339.73 & 2263.14 & 2314.29 & 2317.71 & 2293.24 & 2109.6 & 85.91 & 2300.30 & 2020.11 \\
 & 10  & 3008.04 & \textbf{3043.62} & 3000.11 & 2851.42 & 2899.88 & 2989.11 & 2965.51 & 2822.33 & 456.01 & 2965.91 & 2532.34 \\
 & 15  & 3519.25 & \textbf{3533.07} & 3500.4 & 3275.9 & 3340.8 & 3490.11 & 3479.19 & 3380.15 & 628.78 & 3465.91 & 2915.90 \\
 & 20  & 3939.2 & \textbf{3946.2} & 3936.05 & 3657.36 & 3722.2 & 3934.66 & 3910.11 & 3842.82 & 695.57 & 3822.11 & 3282.54 \\

\midrule
\multirow{4}{*}{Lastfm-Asia}
 & 5  & 452.68 & \textbf{487.06} & 406.41 & 351.21 & 449.76 & 407.8 & 188.00 & 178.70 & 126.54 & 409.80 & 99.40  \\
 & 10  & 773.71 & \textbf{815.85} & 806.61 & 695.56 & 735.28 & 697.32 & 571.75 & 318.26 & 415.93 & 534.06 & 557.20 \\
 & 15 & 1036.49 & \textbf{1077.62} & 1090.26 & 866.06 & 944.14 & 976.72 & 913.53 & 410.2 & 587.65 & 662.31 & 873.06 \\
 & 20  & 1280.09 & \textbf{1297.63} & 1280.41 & 1003.63 & 1085.63 & 1207.27 & 1158.69 & 618.8 & 657.01 & 747.04 & 1065.35 \\

\midrule
\multirow{4}{*}{email-Enron}
 & 5  & 7679.31 & \textbf{7689.81} & \textbf{7689.81} & 7001.57 & 7505.15 & 6435.79 & 5960.03 & 879.02 & 4183.08 & 5138.12 & 4862.94 \\
 & 10  & 9955.37 & \textbf{10018.82} & 9995.02 & 8824.24 & 9153.67 & 9480.59 & 9359.18 & 8066.82 & 5276.44 & 7221.76 & 7355.98 \\
 & 15  & 11467.5 & \textbf{11589.03} & 11588.8 & 9993.2 & 10482.51 & 11197.63 & 11189.28 & 10162.89 & 6219.94 & 8267.10 & 9143.80 \\
 & 20  & 12738.43 & \textbf{12925.81} & 12841.32 & 11056.62 & 11500.55 & 12429.74 & 12429.74 & 11869.34 & 7025.14 & 9262.36 & 10843.41 \\

\midrule
\multirow{4}{*}{Facebook}
 & 5  & \textbf{1570.67} & \textbf{1570.67} & \textbf{1570.67} & 1323.08 & \textbf{1570.67} & 1424.51 & 380.37 & 890.93 & 297.08 & 589.48 & 380.37 \\
 & 10  & 2032.22 & \textbf{2037.63} & 2028.97 & 1825.81 & 1974.28 & 1998.38 & 1601.01 & 1537.71 & 447.29 & 781.12 & 1066.63 \\
 & 15  & 2368.61 & \textbf{2639.21} & 2361.72 & 2071.97 & 2472.67 & 2561.29 & 2065.6 & 1800.75 & 555.14 & 954.58 & 1761.60 \\
 & 20  & 2966.17 & 2972.91 & \textbf{2999.34} & 2197.34 & 2588.8 & 2738.75 & 2239.41 & 2413.19 & 1013.8 & 1084.30 & 1934.85 \\
\bottomrule
\end{tabular}
\label{tab:method_comparison}
\end{table}

\begin{table}[t]

\caption{Average relative gaps for NaDS and CDS at different percentages of the time budget. Relative gaps are expressed in \%.}
\centering
\footnotesize
\setlength{\tabcolsep}{2pt}
\begin{tabular}{lccccccccccccc}
\toprule
\multirow{2}{*}{\textbf{Dataset}}  & \multirow{2}{*}{\textbf{\#nodes}}  & \multirow{2}{*}{\textbf{B}} & 
\multirow{2}{*}{\textbf{Time Budget}}
& \multicolumn{5}{c}{\textbf{NaDS}} 
& \multicolumn{5}{c}{\textbf{CDS}} \\
\cmidrule(lr){5-9} \cmidrule(lr){10-14}
& & & & 15\% & 30\% & 50\% & 75\% & 100\% & 15\% & 30\% & 50\% & 75\% & 100\% \\
\midrule
\multirow{4}{*}{Arxiv Astro}   &   \multirow{4}{*}{18772}     &  5 & 2000 & 2.931 & 0.100 & 0     & 0     & 0     & 10.118 & 10.118 & 10.118 & 10.118 & 10.118 \\
              &       & 10  & 4000 & 6.743 & 2.033 & 0.066 & 0     & 0     & 13.708 & 13.708 & 13.708 & 13.708 & 13.088 \\
              &       &  15 & 6000 & 7.385 & 3.249 & 0.759 & 0     & 0     & 13.152 & 13.152 & 13.152 & 12.775 & 12.775 \\
              &       & 20 & 8000 & 8.517 & 3.144 & 0.815 & 0.051 & 0     & 13.667 & 13.667 & 13.542 & 13.542 & 13.499 \\
\midrule
\multirow{4}{*}{Arxiv Gr-Qc }  &   \multirow{4}{*}{52442}    &  5 & 1000 & 2.976 & 2.976 & 2.976 & 2.976 & 2.976 & 3.541  & 3.040  & 2.655  & 1.802  & 1.323  \\
              &       &  10 & 2000 & 2.241 & 1.328 & 1.328 & 1.328 & 1.328 & 6.972  & 6.692  & 6.635  & 4.816  & 4.515  \\
              &       &  15 & 3000 & 0.320 & 0.116 & 0.001 & 0.001 & 0     & 11.625 & 11.084 & 11.072 & 9.087  & 8.954  \\
              &       &  20 & 4000 & 0.296 & 0.207 & 0.128 & 0     & 0     & 10.979 & 10.533 & 10.134 & 8.140  & 8.023  \\
\midrule
\multirow{4}{*}{Arxiv Hep-Ph}  &  \multirow{4}{*}{12008}     & 5 & 1000 & 1.165 & 0.046 & 0     & 0     & 0     & 2.864  & 2.864  & 2.864  & 2.864  & 2.864  \\
              &       &  10 & 2000 & 2.456 & 0.829 & 0.047 & 0     & 0     & 5.445  & 5.445  & 5.445  & 5.208  & 5.208  \\
              &       &  15 & 3000 & 4.415 & 1.990 & 0.402 & 0.013 & 0     & 7.352  & 7.352  & 7.072  & 6.916  & 6.916  \\
              &       &  20 & 4000 & 4.925 & 2.261 & 0.435 & 0.015 & 0     & 7.407  & 7.202  & 7.202  & 7.163  & 7.154  \\
\midrule
\multirow{4}{*}{Lastfm-Asia}   &   \multirow{4}{*}{7624}    & 5 & 1000 & 0.052 & 0.052 & 0.052 & 0.052 & 0.052 & 30.914 & 30.914 & 25.174 & 25.174 & 22.687 \\
              &       & 10  & 2000 & 1.282 & 1.187 & 0.425 & 0.220 & 0     & 21.788 & 16.684 & 14.248 & 12.787 & 9.909  \\
              &       & 15  & 3000 & 1.664 & 0.797 & 0.155 & 0.095 & 0     & 20.747 & 19.650 & 18.520 & 16.949 & 16.478 \\
              &       &  20 & 4000 & 1.663 & 0.738 & 0.120 & 0     & 0     & 24.426 & 23.057 & 22.050 & 21.672 & 21.554 \\
\midrule
\multirow{4}{*}{Email-Enron}     &  \multirow{4}{*}{36692}     &  5 & 2000 & 7.911 & 4.246 & 1.363 & 0.104 & 0     & 8.818  & 8.818  & 8.818  & 8.818  & 8.818  \\
              &       &  10 & 4000 & 10.098 & 6.845 & 3.266 & 1.068 & 0     & 11.378 & 11.378 & 11.378 & 11.378 & 11.378 \\
              &       &  15 & 6000 & 10.072 & 7.993 & 4.459 & 1.715 & 0     & 12.848 & 12.848 & 12.848 & 12.848 & 12.848 \\
              &       & 20  & 8000 & 11.407 & 9.291 & 4.746 & 1.794 & 0     & 13.197 & 13.197 & 13.197 & 13.197 & 13.197 \\
\midrule
\multirow{4}{*}{Facebook}   &   \multirow{4}{*}{4039}    &  5 & 1000 & 8.514 & 2.981 & 1.861 & 0     & 0     & 24.188 & 23.839 & 21.063 & 18.261 & 15.763 \\
              &       &  10 & 2000 & 9.221 & 3.965 & 1.527 & 0.289 & 0     & 14.532 & 13.893 & 13.604 & 10.600 & 10.167 \\
              &       & 15  & 3000 & 17.274 & 10.729 & 5.392 & 2.149 & 0     & 21.401 & 20.723 & 20.718 & 12.832 & 12.515 \\
              &       & 20  & 4000 & 24.546 & 16.406 & 9.475 & 2.119 & 0     & 30.943 & 30.936 & 30.876 & 25.920 & 25.915 \\
\bottomrule
\end{tabular}
\label{tab:relative_gaps_time}
\end{table}

We remark that classic methods and heuristics have a significantly lower computational cost, albeit at the expense of solution quality, as demonstrated by the results presented below.
Before presenting the results, we define the settings for the GIP model used in our experiments. As shown in Equation \eqref{treshold-bound}, the parameters $\theta_l$ and $\theta_h$ fully define the behavior of the GIP model. We have chosen the following values for our experiments:
\begin{itemize}
    \item $\theta_l = 2$, i.e. it requires at least 2 nodes to activate another node; 
    \item $\theta_h = 50$: which ensures variance in node behavior.
\end{itemize}
Using this setting, the first observation is that KC generally performs poorly compared to other heuristics, both in terms of computation time and solution quality as we can see in Table~\ref{tab:method_comparison}. Consequently, we decided to use the SD heuristic as a starting point for both CDS and NaDS. 
We then conducted two sets of experiments on the six networks mentioned above, varying the budget across four values: $B = 5, 10, 15, 20$, which is connected with the complexity of the problem.

In the first set of experiments, we used the SD heuristic as the starting point for both methods.
Table~\ref{tab:method_comparison} shows the best results achieved and the method that achieved them, in terms of cumulative influence as defined in \eqref{obj}.
This approach ensures that CDS and NaDS consistently achieve the best solutions across all the methods considered. One of the strengths of the direct search approach is its ability to be combined with heuristics to achieve better results, as demonstrated in this study.

In the second set of experiments, we aimed to test the robustness of CDS and NaDS. For each network, we selected $10$ pseudo-random feasible points to use as starting points for both direct search approaches. These pseudo-random points were chosen to be similar to the SD solution: the starting points were constructed by sampling $B$ random nodes from the list of the top $4B$ nodes, where $B$ is the budget for the specific problem. This approach ensures that the starting points are random yet still possess a good value in terms of influence.
Additionally, we used a time-stopping condition for every problem, depending on the network size. For the larger networks, namely Arxiv Astro and Aemail-Enron, the time budget was $400$ seconds multiplied by the budget $B$, while for the other networks, it was $200$ seconds multiplied by the budget $B$.
In Figure~\ref{fig:avggap}, we report the average relative gap of the solutions of CDS and NaDS, where the relative gap at time $t$ for a specific problem is defined as:
\begin{equation}
g(t) = \frac{m - s(t)}{m},
\end{equation}
with $s(t)$ the solution found at time $t$ by the method and $m$ is the best solution found among all methods.

Table~\ref{tab:method_comparison} presents more detailed results from both sets of experiments and summarize the best outcomes achieved by classic models and heuristics (namely SD , SG, KC, CC, and CI) as well as the NaDS and CDS methods. For NaDS, we provide results from the first set of experiments, labeled as NaDS (disc), where the method was initialized using the solution from SD. Additionally, results from the second set of experiments, using 10 pseudo-random starting points, are included. For these, we use the labels NaDS (mean) for the average results across a single problem and NaDS (best) for the best result obtained.
For more comprehensive results from the second set of experiments, we refer to Section~\ref{sec:ext_res} of the Appendix (Section~\ref{sec:ext_res}).

Table 3 summarizes the observed time complexity across NaDS and CDS algorithms, revealing that NaDS maintains a better scaling behavior than CDS while giving much better results in terms of solution quality.  Notably, while greedy heuristics show good time complexity in theory, their actual cpu-time cost is much larger than our methods (and solution quality is not as good as NaDS). As for the other heuristics, while they achieve faster computation times, our experiments reveal this often comes at the cost of solution quality.

We can thus  draw the following conclusions from the numerical experience we carried out. More specifically, 
\begin{enumerate} [label=(\roman*)]
    \item Classic methods do not perform well in this setting. While the SD yields good results, it does not guarantee success across all datasets. Notably, in this setting, the submodularity of the objective functions does not hold, which was a crucial assumption for the guarantees provided by greedy methods. This implies that in this context, direct search methods have the upper hand, consistently showing better results. As we can see in Table~\ref{tab:method_comparison}, all the best results were achieved by direct search methods. In particular, NaDS achieved the best results for every problem except one.
   
    \item NaDS outperforms CDS in almost every dataset, both with SD and pseudo-random start. We also observe that the performance difference increases with the complexity of the problems, particularly represented by the budget $B$. This confirms that the improved neighborhood formulation of NaDS scales well with the dimensions of the graphs. We note that in every problem, except for one single case, NaDS achieves better results than CDS. Figure~\ref{fig:avggap} highlights the improved computational efficiency of the NaDS method.
\end{enumerate}



\section{Conclusions}

In this paper, we introduced  NaDS, a novel direct search method for solving the IM problem in networks. The NaDS method leverages the structure of the network to enhance computational efficiency and effectiveness in identifying influential nodes. Our approach integrates the GIP  model to better capture the dynamics of influence spread, providing a more robust and scalable solution compared to traditional direct search methods.

Extensive numerical experiments demonstrate that NaDS outperforms many classic methods, like, e.g., Single Discount  \cite{chen2010scalable}
        Simple Greedy (SG) \cite{kempe2003maximizing}, 
       Katz Centrality  \cite{tian2022unifying}, K-core Centrality \cite{Kitsak_2010} and 
       Collective Influence \cite{Morone_2015} 
in terms of 
quality of the solutions. In particular, 
in our GIP setting, some of those heuristics, like, e.g., greedy algorithms, lose their theoretical guarantees, especially those based on the submodularity of the objective function. This makes the use of direct search methods a valuable approach in practice.
 So, while those classic methods have proven successful in settings where the objective function exhibits submodularity \cite{mossel2010submodularity}, our work highlights their limitations in those complex scenarios, where such properties may not hold. This observation is consistent with recent critiques of heuristic methods in non-submodular settings \cite{arora2017debunking, tian2022unifying}.
 Furthermore, the superior performance of NaDS over the Customized Direct Search (CDS) method \cite{tian2022unifying} highlights the importance of incorporating network structure into the optimization process. By refining the neighborhood definition to focus on graph-connected nodes, NaDS indeed achieves significant gains in terms of computational efficiency and highly improved solution quality.  Our results thus complement recent efforts to unify propagation models \cite{tian2022unifying}, demonstrating that tailored direct search methods work pretty well when considering more sophisticated diffusion dynamics.

Future works will focus on refining the computational efficiency of the algorithmic framework, as the implementation still has many possible improvements. 
The current implementation of NaDS could indeed be optimized by, e.g., parallelizing the neighborhood evaluation phase, leveraging modern multi-core architectures. This will likely allow for further experimentation and application of the NaDS method to even larger networks, on the scale of hundreds of thousands or even millions of nodes.

Additionally, an important extension to the IM problem can be made by considering higher-order networks, i.e. hypergraphs and multilayer graphs.  In particular, extending NaDS to hypergraphs would require redefining the neighborhood to account for hyperedges, thus capturing the influence of node groups; while adapting it to multilayer networks would involve integrating cross-layer dependencies into the neighborhood formulation.


\label{sec:concl}


\bibliographystyle{comnet}
\bibliography{sample-base}

\appendix

\section{Tests on artificial graphs}\label{sec:artificial}
In this section, we present a series of tests on artificial graphs. These graphs exhibit diverse structural characteristics, since we aimed to cover a broad range of graph types. In particular, we conducted experiments with Lancichinetti-Fortunato-Radicchi (LFR) graphs~\cite{Lancichinetti_2008}: We varied the number of nodes $n \in \{1000, 2000, 5000\}$ and the mixing parameter $\mu \in \{0.1, 0.5\}$, which controls the fraction of edges a node shares with nodes in other communities. This resulted in 6 different graph instances.

The experiments were carried out using the same parameter settings as those used in the real-world graph experiments (see Section~\ref{sec:results}). Specifically, we set $\theta_l = 2$ and $\theta_h = 50$. For the budget $B$, we used the values $5$, $10$, $15$, and $20$.
We adopted the same methodology as in the previous set of experiments: we compared the results obtained by the classic methods and heuristics \textbf{SD}, \textbf{SG}, \textbf{KC}, \textbf{CC}, and \textbf{CI}. Both \textbf{CDS} and \textbf{NaDS} were initialized using the \textbf{SD} solution, which once again proved to be the most effective starting point.
For both NaDS and CDS, we allocated a time budget equal to $20 \times B$ seconds. In Table~\ref{tab:lfr}, we report extensive results obtained on the LFR benchmark graphs. As shown, direct search methods consistently outperform the alternative heuristics, which generally struggle to achieve comparable solution quality. Focusing on the comparison between NaDS and CDS, we observe that NaDS achieves the best score in 22 out of 24 experiments. Notably, the performance gap between the two algorithms widens as the network size increases, highlighting the superior scalability of NaDS.

\begin{table}[t]
 
\caption{Performance comparison (in terms of cumulative influence score and computational time in seconds) based on cumulative influence on the Lancichinetti–Fortunato–Radicchi (LFR), evaluated across different network parameters (number of nodes and mixing parameter $\mu$) and budget levels. The best results are highlighted in bold.}
\centering
\scriptsize
\setlength{\tabcolsep}{1.2pt}
\begin{tabular}{lllccccccccccccccc} 
\toprule
\multirow{2}{*}{\textbf{\#nodes}} & \multirow{2}{*}{$\boldsymbol{\mu}$} & \multirow{2}{*}{\textbf{B}} 
& \multicolumn{2}{c}{\textbf{NaDS}} 
& \multicolumn{2}{c}{\textbf{CDS}} 
& \multicolumn{2}{c}{\textbf{SD}} 
& \multicolumn{2}{c}{\textbf{SG}} 
& \multicolumn{2}{c}{\textbf{KC}} 
& \multicolumn{2}{c}{\textbf{CC}} 
& \multicolumn{2}{c}{\textbf{CI}} \\
\cmidrule(lr){4-5} \cmidrule(lr){6-7} \cmidrule(lr){8-9} \cmidrule(lr){10-11} \cmidrule(lr){12-13} \cmidrule(lr){14-15} \cmidrule(lr){16-17}
 & & & Score & Time & Score & Time & Score & Time & Score & Time & Score & Time & Score & Time & Score & Time \\
\midrule
\multirow{8}{*}{1000} & \multirow{4}{*}{0.1} 
& 5  & \textbf{131.496}   & 45.924   & \textbf{131.496}    & 44.159    & \textbf{131.496}    & 0.003           & 10.138       & 7.920       & 26.454      & 0.062      & 35.359       & 0.016       & 104.935   & 2.360    \\
 &  & 10 & \textbf{204.773}   & 195.358  & \textbf{204.773}    & 179.544   & 192.804     & 0.016           & 67.838       & 20.226      & 44.558      & 0.064      & 96.074       & 0.016       & 183.946   & 3.783    \\
 &  & 15 & 255.156   & 160.258  & \textbf{257.809}   & 300   & 225.994     & 0.016           & 102.547      & 33.360      & 56.623      & 0.058      & 144.525      & 0.016       & 229.348   & 5.299    \\
 &  & 20 & \textbf{312.928}   & 400 & \textbf{312.928}   & 400  & 299.813     & 0.016           & 163.165      & 46.064      & 79.628      & 0.053      & 176.542      & 0.016       & 299.666   & 6.641    \\
 \cmidrule{2-17}
 & \multirow{4}{*}{0.5} & 5  & \textbf{116.847}   & 50.453   & \textbf{116.847}   & 100   & 73.639      & 0.016           & 54.469       & 8.153       & 45.181      & 0.062      & 0.8          & 0.016       & 89.099    & 4.662    \\
 &  & 10 & \textbf{209.474}   & 159.785  & \textbf{209.474}    & 185.104   & 199.115     & 0.001 & 183.232      & 21.978      & 105.176     & 0.062      & 16.229       & 0.016       & 194.428   & 7.918    \\
 &  & 15 & 284.536   & 300  & \textbf{285.706}    & 300   & 273.187     & 0.016           & 263.69       & 35.487      & 154.876     & 0.078      & 43.807       & 0.016       & 276.064   & 11.029   \\
 &  & 20 & \textbf{344.337}   & 400  & 343.017    & 400   & 330.312     & 0.016           & 336.25       & 48.521      & 202.412     & 0.076      & 62.602       & 0.018       & 328.96    & 13.770   \\
 \midrule
\multirow{8}{*}{2000} & \multirow{4}{*}{0.1} & 5  & 260.675   & 100  & \textbf{267.015}    & 100   & 210.693     & 0.047           & 2.704        & 14.869      & 32.749      & 0.352      & 132.681      & 0.047       & 210.693   & 8.317    \\
 &  & 10 & \textbf{412.057}   & 200  & 403.444    & 200   & 393.181     & 0.047           & 146.831      & 52.185      & 73.679      & 0.268      & 235.524      & 0.032       & 393.181   & 12.731   \\
 &  & 15 & \textbf{498.905}   & 300  & 469.297    & 300   & 448.701     & 0.047           & 386.527      & 92.955      & 97.86       & 0.508      & 337.595      & 0.031       & 465.459   & 17.616   \\
 &  & 20 & \textbf{563.99}    & 400  & 537.594    & 400   & 519.091     & 0.047           & 465.08       & 132.121     & 114.537     & 0.265      & 397.718      & 0.047       & 528.72    & 21.978   \\
 \cmidrule{2-17}
 & \multirow{4}{*}{0.5} & 5  & \textbf{174.515}   & 100  & 156.609    & 100   & 140.72      & 0.033           & 62.647       & 20.632      & 38.161      & 0.334      & 0.2          & 0.031       & 190.467   & 14.711   \\
 &  & 10 & \textbf{348.266}   & 200  & \textbf{348.266}    & 200   & \textbf{348.266}     & 0.048           & 293.524      & 62.632      & 95.963      & 0.298      & 27.18        & 0.042       & 350.832   & 25.253   \\
 &  & 15 & \textbf{486.225}   & 300  & \textbf{486.225}    & 300   & 480.068     & 0.047           & 431.186      & 102.975     & 169.974     & 0.293      & 68.305       & 0.047       & 480.068   & 34.630   \\
 &  & 20 & \textbf{565.554}   & 400  & 562.094    & 400   & 556.632     & 0.047           & 545.575      & 144.193     & 221.997     & 0.342      & 83.782       & 0.047       & 562.081   & 43.689   \\
\midrule
\multirow{8}{*}{5000} & \multirow{4}{*}{0.1} & 5  & \textbf{428.389}   & 100  & 320.864    & 100   & 290.276     & 0.336           & 10.766       & 129.576     & 36.976      & 2.585      & 66.57        & 0.110       & 290.276   & 29.457   \\
 &  & 10 & \textbf{568.159}   & 200  & 518.015    & 200  & 518.014     & 0.271           & 118.891      & 328.337     & 62.674      & 2.612      & 160.628      & 0.094       & 534.707   & 50.408   \\
 &  & 15 & \textbf{729.797}   & 300  & 683.534    & 300   & 665.404     & 0.268           & 511.435      & 572.642     & 84.292      & 2.623      & 287.186      & 0.109       & 665.404   & 71.135   \\
 &  & 20 & \textbf{841.139}   & 400  & 816.961    & 400  & 816.961     & 0.271           & 652.237      & 824.259     & 100.237     & 2.814      & 366.987      & 0.111       & 819.507   & 89.843   \\
 \cmidrule{2-17}
 & \multirow{4}{*}{0.5} & 5  & \textbf{358.507}   & 100 & 353.704    & 100  & 340.548     & 0.307           & 232.982      & 118.966     & 55.829      & 2.517      & 0.0          & 0.125       & 340.548   & 59.286   \\
 &  & 10 & \textbf{545.81}    & 200 & \textbf{545.81}     & 200  & 536.132     & 0.262           & 488.025      & 375.165     & 131.472     & 2.552      & 26.888       & 0.126       & 536.132   & 103.314  \\
 &  & 15 & \textbf{719.843}   & 300  & 694.727    & 300   & 687.606     & 0.257           & 661.534      & 626.514     & 172.1       & 2.948      & 32.512       & 0.125       & 687.606   & 141.707  \\
 &  & 20 & \textbf{862.267}   & 400  & 841.167    & 400   & 841.167     & 0.261           & 786.865      & 886.756     & 223.931     & 2.583      & 154.249      & 0.101       & 841.167   & 180.534 \\
\bottomrule
\end{tabular}
\label{tab:lfr}
\end{table}

\section{Additional Results}\label{sec:ext_res}
In this section we report comprehensive results from the second set of experiments presented in Section~\ref{sec:results}.
Figures~\ref{fig:arxivastro} to \ref{fig:facebook} focus on the comparison between NaDS and CDS. We plot the best result found by both methods over time for each value of budget $B$ and different networks, averaged over the 10 different pseudo-random starts, to obtain more stable results. We also show, as a reference, the objective value achieved by SD, since we consider it the best competitor to direct search methods.

\clearpage
\begin{figure*}[!t]
    \centering
\includegraphics[scale=0.26]{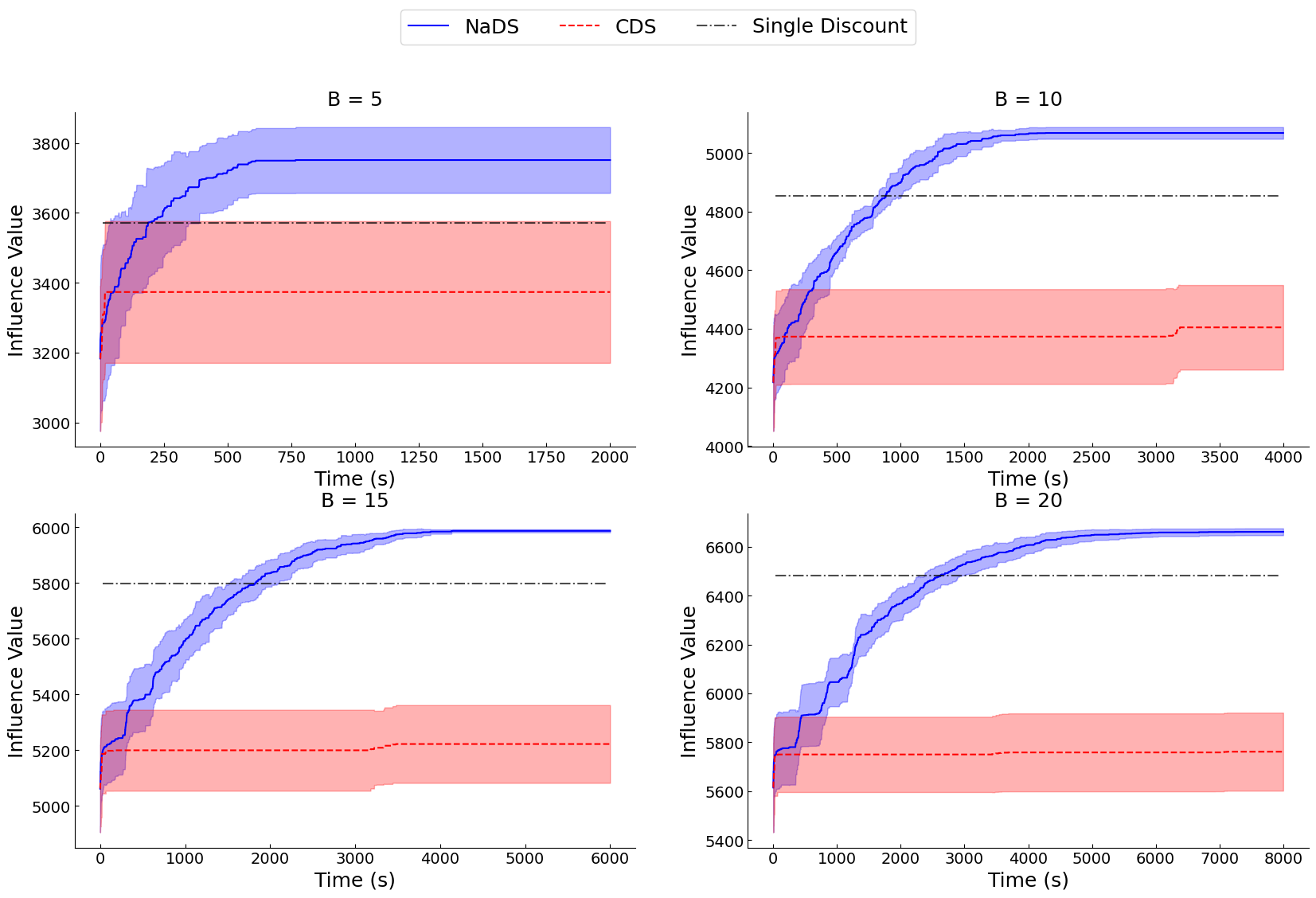}
    \caption{   
 Average relative gaps over time (with shaded standard deviation) across $10$ pseudo-random feasible points of the second set of experiments on the Arxiv Astro dataset.The NaDS method is represented by the blue solid line, the CDS method by the red dashed line, and the Single Discount solution by the black dash-dotted line.}  Each panel corresponds to a different budget value $\mathbf{B = 5, 10, 15, 20}$.
    \label{fig:arxivastro}
\end{figure*}

\begin{figure*}[!t]
    \centering
    \includegraphics[scale=0.26]{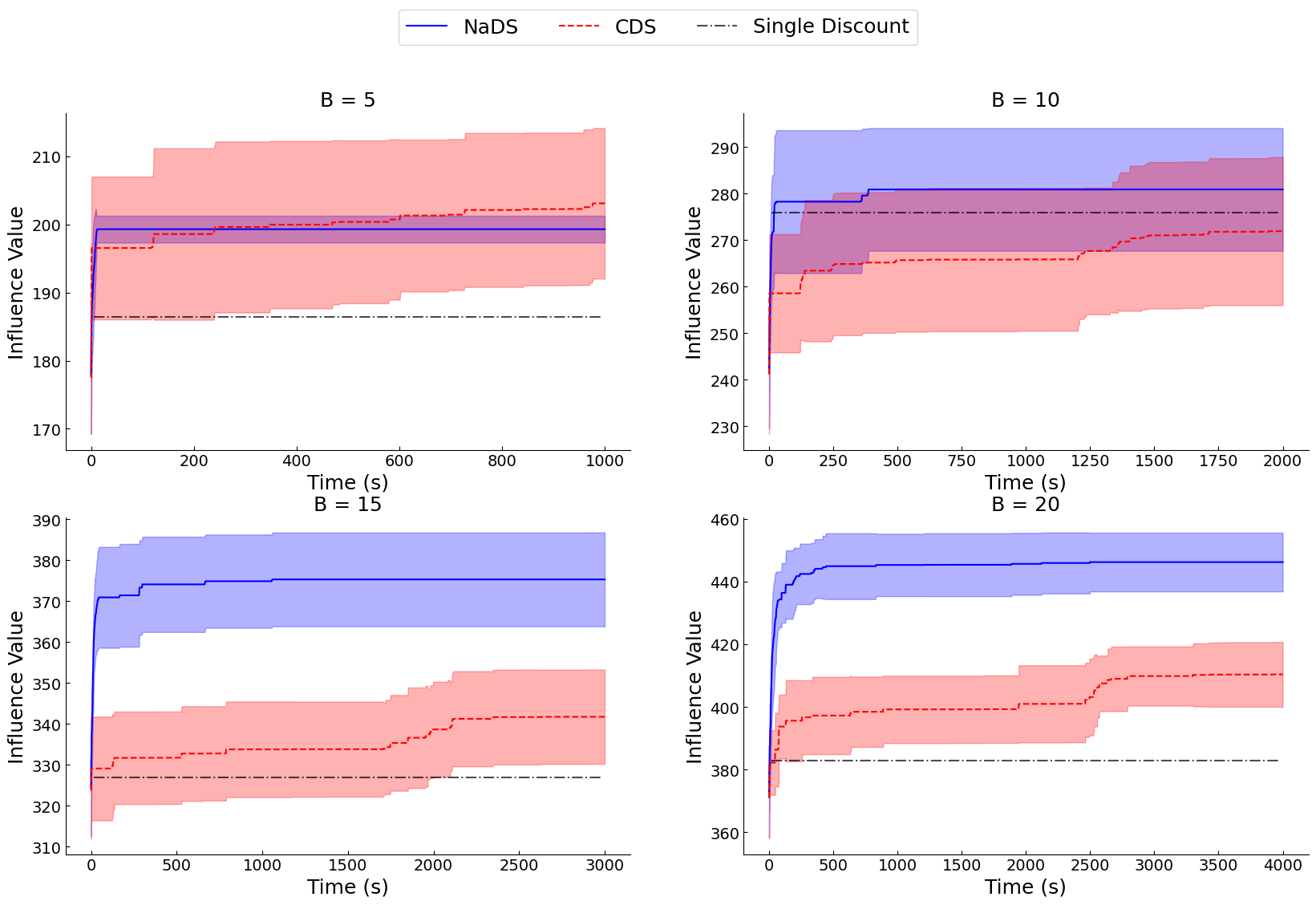}
    \caption{
 Average relative gaps over time (with shaded standard deviation) across $10$ pseudo-random feasible points of the second set of experiments on the Arxiv Gr-Qc dataset. The NaDS method is represented by the blue solid line, the CDS method by the red dashed line, and the Single Discount solution by the black dash-dotted line.  Each panel corresponds to a different budget value $\mathbf{B = 5, 10, 15, 20}$.}
    \label{fig:arxivgrqc}
\end{figure*}

\begin{figure*}[!t]
    \centering
    \includegraphics[scale=0.26]{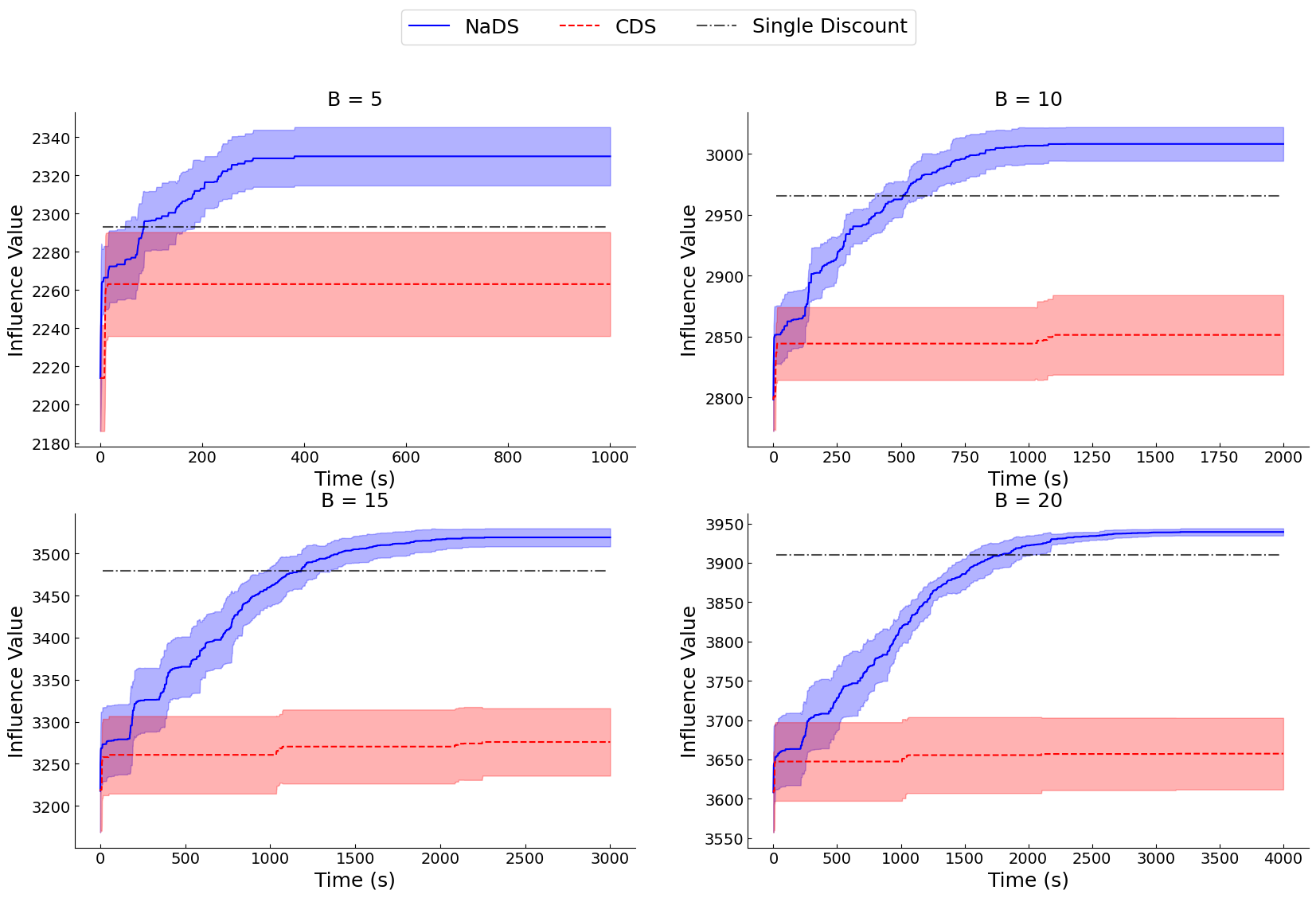}
    \caption{
 Average relative gaps over time (with shaded standard deviation) across $10$ pseudo-random feasible points of the second set of experiments on the Arxiv HeP-Ph dataset. The NaDS method is represented by the blue solid line, the CDS method by the red dashed line, and the Single Discount solution by the black dash-dotted line.  Each panel corresponds to a different budget value $\mathbf{B = 5, 10, 15, 20}$.}
    \label{fig:arxivhepph}
\end{figure*}

\begin{figure*}[!t]
    \centering
    \includegraphics[scale=0.26]{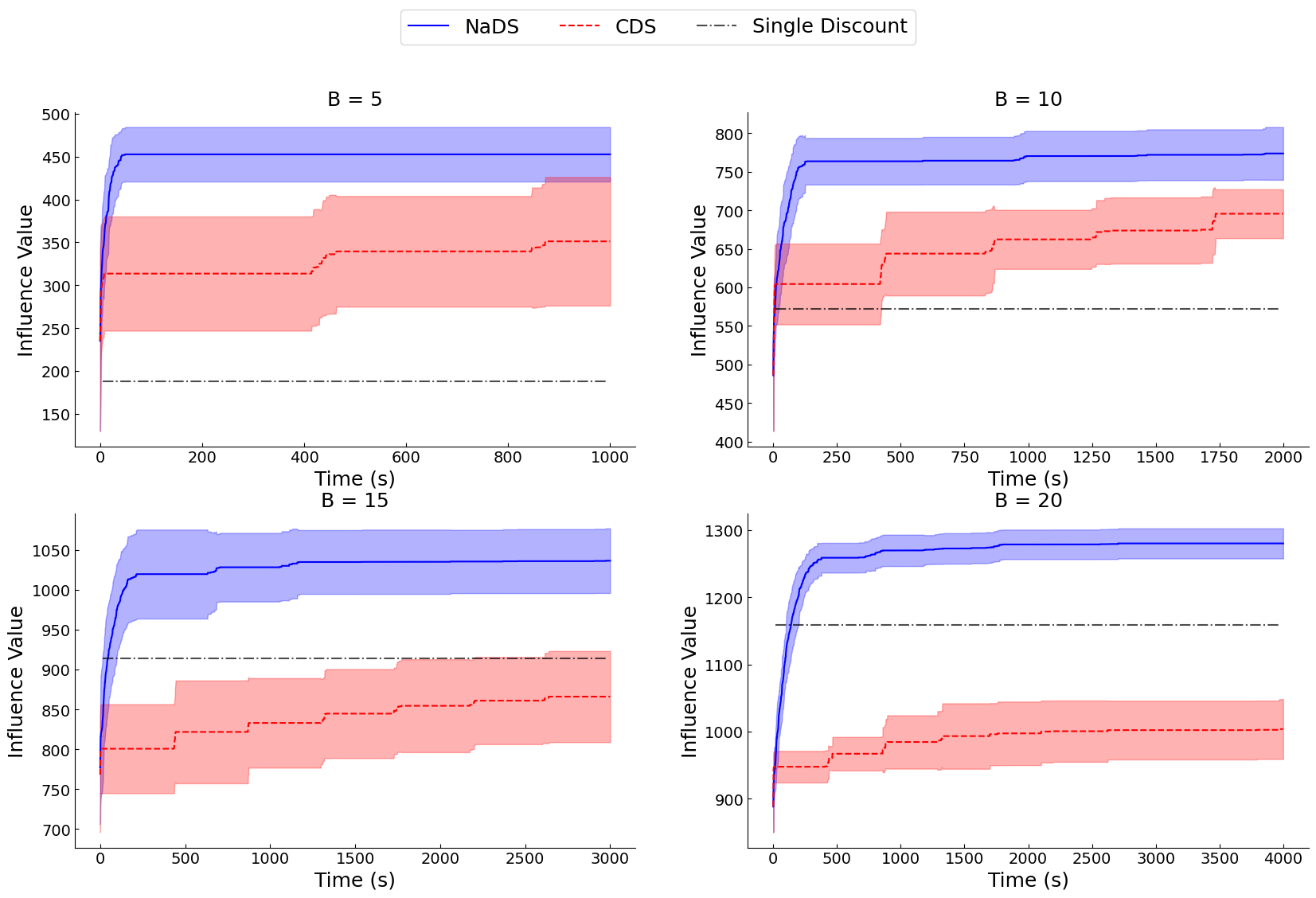}
    \caption{        
     Average relative gaps over time (with shaded standard deviation) across $10$ pseudo-random feasible points of the second set of experiments on the Lastfm-Asia dataset. The NaDS method is represented by the blue solid line, the CDS method by the red dashed line, and the Single Discount solution by the black dash-dotted line.  Each panel corresponds to a different budget value $\mathbf{B = 5, 10, 15, 20}$.}
    \label{fig:lastfmasia}
\end{figure*}

\begin{figure*}[!t]
    \centering
    \includegraphics[scale=0.26]{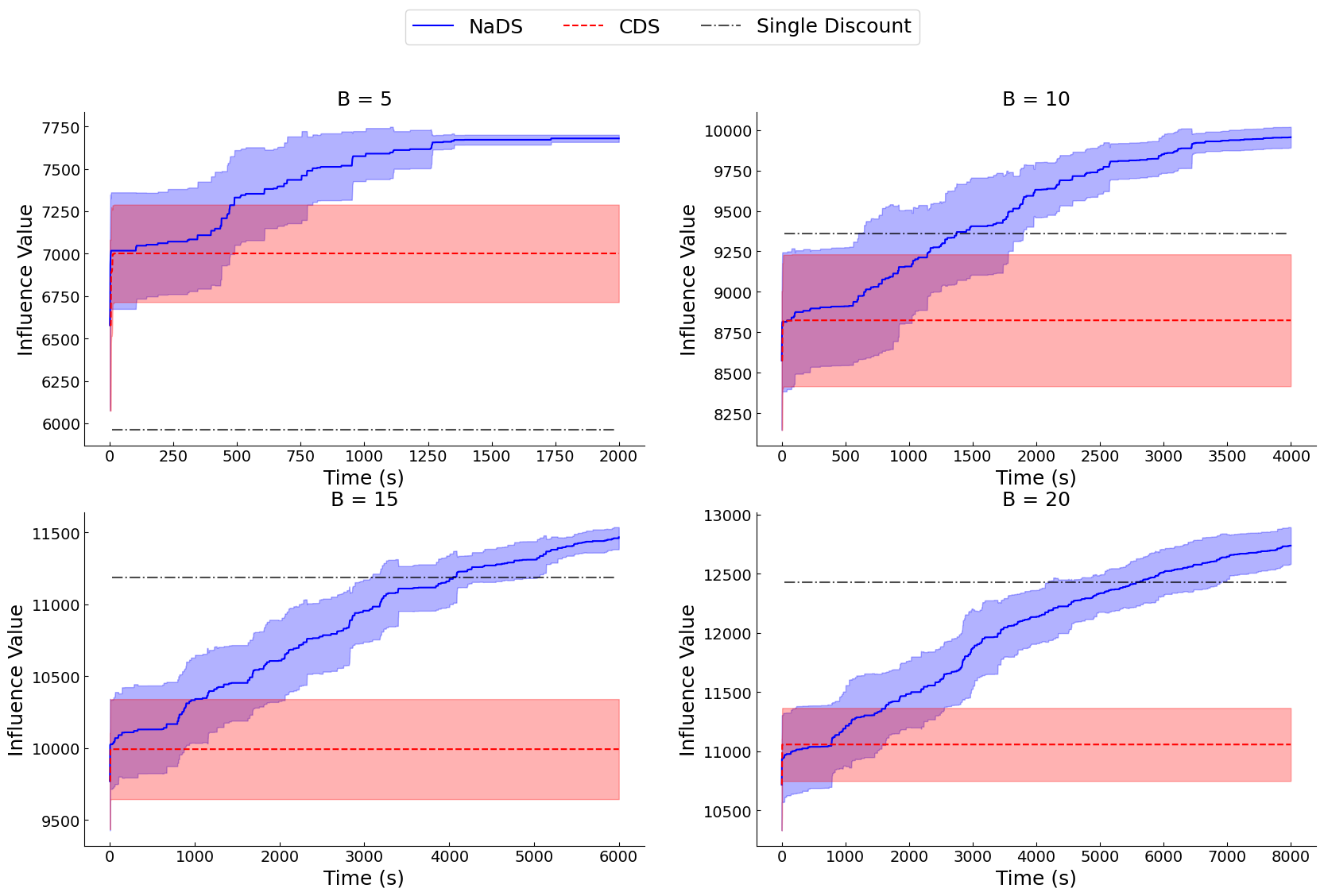}
    \caption{        
     Average relative gaps over time (with shaded standard deviation) across $10$ pseudo-random feasible points of the second set of experiments on the email-Enron dataset. The NaDS method is represented by the blue solid line, the CDS method by the red dashed line, and the Single Discount solution by the black dash-dotted line.  Each panel corresponds to a different budget value $\mathbf{B = 5, 10, 15, 20}$.}
    \label{fig:emailenron}
\end{figure*}

\begin{figure*}[!t]
    \centering
    \includegraphics[scale=0.26]{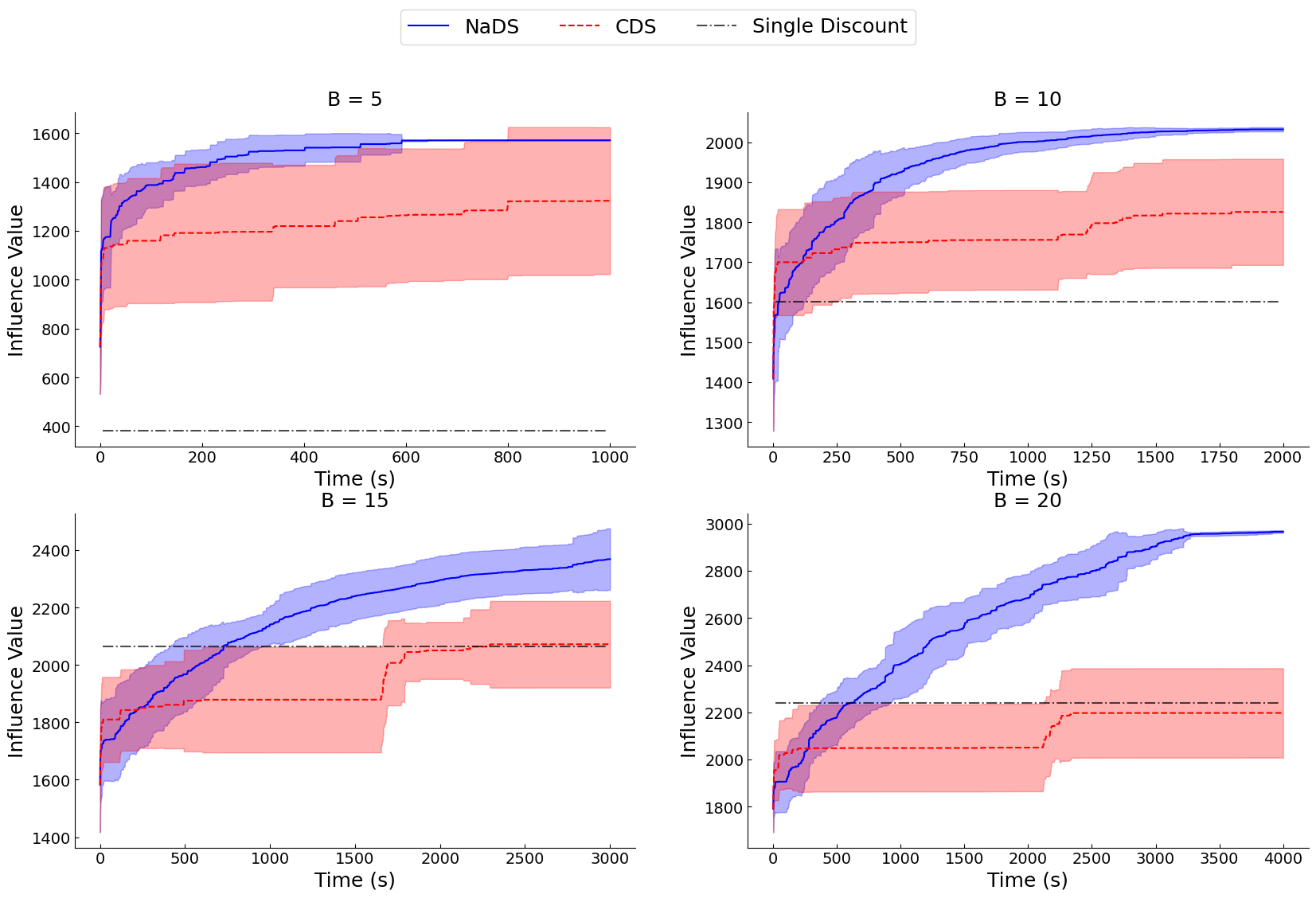}
    \caption{        Average relative gaps over time (with shaded standard deviation) across $10$ pseudo-random feasible points of the second set of experiments on the Facebook dataset. The NaDS method is represented by the blue solid line, the CDS method by the red dashed line, and the Single Discount solution by the black dash-dotted line.  Each panel corresponds to a different budget value $\mathbf{B = 5, 10, 15, 20}$.}
    \label{fig:facebook}
\end{figure*}

\end{document}